\newtheorem{theorem}{Theorem}[section]
\newtheorem{corollary}[theorem]{Corollary}
\newtheorem{example}[theorem]{Example}
\newenvironment{proof}{\noindent {\bf Proof.}}{\rule{3mm}{3mm}\par\medskip}
\begin{document}
	
	\title{Network Function Computation With Different Secure Conditions}
	
	\author{Min Xu,  Gennian Ge and Minqian Liu
		
		\thanks{M. Xu ({\tt minxu0716@qq.com}) and M. Liu ({\tt mqliu@nankai.edu.cn}) are with the School of Statistics and Data Science, LPMC \& KLMDASR,
            Nankai University, Tianjin 300071, China. The research of M. Liu was supported by the National Natural Science Foundation of China under Grant No. 12131001, and National Ten Thousand Talents Program.}
		\thanks{G. Ge ({\tt gnge@zju.edu.cn}) is with the School of Mathematical Sciences, Capital Normal University,
			Beijing 100048, China. The research of G. Ge was supported by the National Key Research and Development Program of China under Grant Nos. 2020YFA0712100  and  2018YFA0704703, National Natural Science Foundation of China under Grant No. 11971325, and Beijing Scholars Program.

	}}
	\date{}
	\maketitle
    \begin{abstract}
		In this paper, we investigate function computation problems under different secure conditions over a network with multiple source nodes and a single sink node which desires a function of all source messages without error. A wiretapper has access to some edges of the network. Based on different practical requirements, we consider two secure conditions named as secure and user secure respectively. The main parameter concerned here is the computing rate, which is the average times of the target function that can be computed securely or user securely without error for one use of the network. In the secure case, a new upper bound which is tighter than the previous one is provided for arithmetic sum functions and arbitrary networks. Moreover, we show that the improved upper bound is strictly tight for tree-like networks. In the user secure case, we give a sufficient and necessary condition for the existence of user secure network codes and obtain an upper bound for the computation capacity.
	\end{abstract}
    {\bf Keywords:} network function computation, secure, user secure, capacity

    \section{Introduction}
In this paper, we study the problem of function computation over a communication network with two different secure conditions.
The general setup of secure network function computation can be modeled as follows.
In a communication network characterized by a directed acyclic graph, there is a sink node which wants to compute a function $f$ about the messages generated at source nodes independently and uniformly.
Meanwhile, there is a wiretapper who has access to some edge subsets $W$ of the graph, we use $\mathcal{W}$ to denote all possible wiretap sets.
Each link of the network has unit transmission capacity and each node has unlimited computation capacity.
We hope the sink node can compute the function without error and the wiretapper can not get any information about the source message or the result of the function.

In last few years, network function computation has attracted considerable attention due to its application in sensor networks \cite{GK2005} and data processing \cite{DPK2016}, etc.
The security problem is necessary to be studied when the network function computation is applied to practice.

\subsection{Related Works}
The main parameter concerned in this paper is the computation capacity for the sink node to reliably compute the target function over the network under the secure conditions.
It's obvious that when $\mathcal{W}=\emptyset$, the secure network function computation problem reduces to non-secure case, which has been studied in recent years \cite{RD2012,AF2013,AFK2011,HTYG2018,GYYL2019}.
When the target function is linear, given the network, the capacity of the network is completely determined and the construction of achievable network codes is given \cite{AF2013}.
However, for general target functions, there are some cut-set type upper bounds of the computation capacity \cite{AFK2011,HTYG2018,GYYL2019}, but these upper bounds are not always tight.
Thus, we will focus on linear target function over finite field in this paper.

There have been extensive studies and applications of the security problem in network coding theory.
In 2002, Cai and Yeung \cite{CY2002} first investigated the existence of secure network codes.
In network coding problem, there is a single source node and multiple sink nodes, each wants a copy of the source message.
The secure condition means the wiretapper cannot get any information about the source message.
And some other secure conditions have also been studied \cite{BN2005,HY2008}.
For secure network function computing, limited work has been done.
Guang et al. \cite{GBY2021} developed an upper bound of the capacity of secure function computing for linear target functions in 2021.
And Li et al. \cite{LL2022} considered another secure condition which they called secret sum.

\subsection{Contributions and Organization}
Our main results are as follows.
On one hand, we show that Guang's upper bound in \cite{GBY2021} is not tight by an example.
From this example, we establish a new upper bound of the capacity under the secure condition.
Moreover, we show that for a special class of networks, our new upper bound is tight.
On the other hand, we define a new secure condition based on practical requirements, and investigate the same problem under the new secure condition.
A sufficient and necessary condition for the existence of secure network codes is addressed and an upper bound of the computing capacity is obtained.

This paper is organized as follows.
Section \ref{s2} introduces network function computation problems under two different secure conditions and reviews some existing bounds in related research works.
In Sections \ref{s3} and \ref{s4}, we focus on the capacity under the secure condition in \cite{GBY2021}.
We establish our new upper bound from an example and prove it in Section~\ref{s3}.
In Section~\ref{s4}, we derive the condition of the linear transformation which turns a non-secure network code into a secure one, and show that for a special class of networks, our new upper bound is tight.
For the new security, we address a sufficient and necessary condition for the existence of secure network codes and establish an upper bound by information-theoretical approach in Section \ref{s5}.
Finally, we give a conclusion of all results in Section \ref{s6}.

\section{Model and Preliminaries}\label{s2}
\subsection{Model and Secure Conditions}
Let $\mathcal{N}=(V,E)$ be a directed acyclic graph where $V$ is the node set and $E$ is the edge set.
For any edge $e\in E$, we use $tail(e)$ and $head(e)$ to denote the tail and head of $e$. For any vertex $v\in V$, let $In(v)=\{e\in E:head(e)=v\}$ and $Out(v)=\{e\in E:tail(e)=v\}$, respectively.
An edge sequence $\{e_1,e_2,\cdots,e_n\}$ is a $path$ from node $u$ to node $v$ if $tail(e_1)=u,head(e_n)=v$ and $tail(e_{i+1})=head(e_i)$ for $i=1,2,\cdots,n-1$.
For two nodes $u,v\in V$, a $cut$ of them is an edge set $C$ such that every path from $u$ to $v$ contains some edges in $C$.
If $U\subset V$ is a set of nodes, the cut of $U$ and $v$ is the edge set separating any vertex $u\in U$ from $v$.
The minimum size of a cut of $u$ and $v$ is denoted as $mincut(u,v)$.
There are two types of special nodes in the graph. Source nodes are the nodes which have no incoming edges and sink nodes are nodes which have no outgoing edges.
In this paper, we focus on multi-source single terminal network, denote the source nodes as $S=\{\sigma_1,\sigma_2,\cdots,\sigma_s\}$ and the sink node as $\rho$.

Generally, a target function $f: \mathcal{A}^s\mapsto\mathcal{O}$ is needed to be computed at sink node $\rho$,
where $\mathcal{A},\mathcal{O}$ are finite fields. We assume that each source node in the network $\mathcal{N}$ generates a sequence of random symbols over $\mathcal{A}$ independently and uniformly, and each edge of the network transmits a symbol of $\mathcal{B}$ once a time, where $\mathcal{B}$ is also a finite field.
Let the message generated at source node $\sigma_i$ be a random variable $M_i$, where $M_1,M_2,\cdots,M_s$ are mutually independent.
Computing the function $l$ times implies source $\sigma_i$ generates a random vector $\mathbf{M}_i=(M_{i1},\cdots,M_{il})$, and denote all the source messages by $\mathbf{M}_S=(\mathbf{M}_1,\cdots,\mathbf{M}_s)$.

In secure network function computation, there is a wiretapper who wants information about messages in the network.
Let $\mathcal{W}$ be all possible wiretap sets, and the wiretapper can access exactly one edge subset $W\in\mathcal{W}$.
$\mathcal{W}$ is known by the source nodes. The sink node need to compute target function
\begin{center}
  $f(\mathbf{M}_S)=(f(M_{1j},M_{2j},\cdots,M_{sj}):j=1,2,\cdots,l)$
\end{center}
without error while the wiretapper cannot get any useful information.
In network coding case, different secure conditions such as secure \cite{CC2011,CY2002}, weakly secure \cite{BN2005}, strongly secure \cite{HY2008} are considered.
In network function computation case, Guang et al. \cite{GBY2021} focused on protecting all source messages from the wiretapper.
Sometimes, in sensor network, the original messages generated at sources are not important, while the computation result is.
Li et al. considered protecting the arithmetic sum of all source messages from the wiretapper\cite{LL2022}.
They assumed each edge has unlimited transmission capacity, i.e. each edge can send unlimited symbols of $\mathcal{B}$ once a time.
We are concerned in this paper with the following two secure conditions.
\begin{equation}\label{secure}
  (secure)\ \ \ \ \ \ \ I(\mathbf{M}_S;\mathbf{Y}_W)=0,
\end{equation}
\begin{equation}\label{usecure}
  (user\ secure)\ \ I(f(\mathbf{M}_S);\mathbf{Y}_W)=0,
\end{equation}
where $\mathbf{Y}_W$ is the message transmitted by the edges in $W$ and observed by the wiretapper. In network coding, the message sent by each node is a function of all the messages the node has. Therefore, random keys are needed under the secure condition. Otherwise, any edge could leakage the information about $\mathbf{M}_S$.
But for the user secure case, random keys are not necessary.

An $(l,n)$ network code for the secure (or user secure) problem $(\mathcal{N},f,\mathcal{W})$ is defined as follows.
First, we let $\mathbf{m}_i\in\mathcal{A}^l$ and $\mathbf{k}_i\in\mathcal{K}_i$ be the outputs of the source message $\mathbf{M}_i$ and the key $\mathbf{K}_i$ respectively, $\mathcal{K}_i=\emptyset$ if random keys are not necessary.
An $(l,n)$ network code consists of a local encoding function for each edge $e$ which is defined as follows,
\begin{equation}\label{2}
  \widehat{\theta}_e:
  \begin{cases}
    \mathcal{A}^l\times\mathcal{K}_i\mapsto\mathcal{B}^n, & \mbox{if $tail(e)= \sigma_i$ for some $i$}; \\
    \prod\limits_{d\in In(tail(e))}\mathcal{B}^n\mapsto\mathcal{B}^n, & \mbox{otherwise},
  \end{cases}
\end{equation}
and a decoding function $\widehat{\varphi}:\prod_{In(\rho)}\mathcal{B}^n\mapsto\mathcal{O}^l$.
By the definition of local encoding function, the message transmitted by edge $e$ denoted as $U_e$ is a function of all source messages and random keys, i.e. for each $e\in E$,
\begin{equation}\label{3}
  \widehat{g}_e(\mathbf{m}_S,\mathbf{k}_S)=
  \begin{cases}
    \widehat{\theta}_e(\mathbf{m}_i,\mathbf{k}_i), & \mbox{if $tail(e)= \sigma_i$ for some $i$}; \\
    \widehat{\theta}_e(\widehat{g}_{In(tail(e))}(\mathbf{m}_S,\mathbf{k}_S)), & \mbox{otherwise}.
  \end{cases}
\end{equation}
We call $\widehat{g}_e$ as global encoding function of edge $e$.
An $(l,n)$ network code works for secure or user secure function computation tasks over a network if the sink node $\rho$ can compute target function without error, i.e. for all $\mathbf{m}_S\in\mathcal{A}^{s\cdot l}$ and $\mathbf{k}_S\in\prod_{i-1}^{s}\mathcal{K}_i$
\begin{equation}\label{4}
  \widehat{\varphi}(\widehat{g}_{In(\rho)}(\mathbf{m}_S,\mathbf{k}_S))=f(\mathbf{m}_S),
\end{equation}
and for any wiretap set $W$, the secure condition (\ref{secure}) and the user secure condition (\ref{usecure}) are satisfied respectively.
The rate of such an $(l,n)$ code $\mathbf{C}$ is defined by
\begin{equation*}
  R(\mathbf{C})=\frac{l}{n}.
\end{equation*}
We say a rate $R$ is achievable if there exists a valid network code with rate $R$ with respect to $(\mathcal{N},f,\mathcal{W})$, which implies that by using the network $n$ times, the sink node can compute target function securely (or user securely) $l$ times.
The capacity of the network $\mathcal{N}$ to compute target function $f$ with possible wiretap set $\mathcal{W}$ is the maximum achievable rate, i.e.
\begin{center}
  $\widehat{C}(\mathcal{N},f,\mathcal{W})=\max$\{$\frac{l}{n}$: there\ exists\ a\ valid\ $(l,n)$\ network\ code\}.
\end{center}

In this paper, suppose $\mathcal{A}=\mathcal{B}=\mathcal{O}=\mathbb{F}_q$, and $f$ is a linear function over $\mathbb{F}_q$, i.e.
\begin{equation*}
  f(x_1,x_2,\cdots,x_s)=\sum\limits_{i=1}^{s}a_i\cdot x_i,
\end{equation*}
where $a_i\in \mathbb{F}_q$ for all $i=1,2,\cdots,s$.
Without loss of generality, let $a_i\neq0$, $\forall i=1,2,\cdots,s$. Otherwise, we can remove the source nodes $\sigma_i$ for those $i$ such that $a_i=0$ and consider computing the same target function over this new network.
Moreover, we let $y_i=a_i\cdot x_i$, and consider the sum function $f'(y_1,y_2,\cdots,y_s)=\sum_{i=1}^{s}y_i$ over $\mathbb{F}_q$.
A secure (or user secure) network code for $(f,\mathcal{N},\mathcal{W})$ is equivalent to a secure (or user secure) network code for $(f',\mathcal{N},\mathcal{W})$.
From this observation, we only need to consider the computing capacity for the arithmetic sum function over $\mathbb{F}_q$.

Particularly, we consider the wiretap set with the form
\begin{equation*}
  \mathcal{W}_r=\{W\subset E:\vert W\vert\leq r\},
\end{equation*}
where $r$ is defined as the security level.
In the rest of this paper, we use $\widehat{C}(\mathcal{N},f,r)$ to denote $\widehat{C}(\mathcal{N},f,\mathcal{W}_r)$.
\subsection{Existing Bounds}
First, let's review some known results about secure network function computation.
For a network $\mathcal{N}$, let $C_{min}$ be the size of minimum cut of $\mathcal{N}$.
We call a cut as $\sigma_i$-cut if it is an edge subset separating $\sigma_i$ from $\rho$ and there doesn't exist any path from other sources to this cut.
For each source node $\sigma_i$, the minimum size of a $\sigma_i$-cut is denoted as $D_{min}^i$.

An important result for linear network computation over finite field is that the capacity of network $\mathcal{N}$ is $C_{min}$ \cite{AF2013}.
For secure case, Guang et al. established an upper bound \cite{GBY2021}
\begin{equation}\label{gubound}
  \widehat{C}(\mathcal{N},f,r)\leq\min\{C_{min},D_{min}-r\},
\end{equation}
where $D_{min}=\min_{\sigma_i\in S}\{D_{min}^i\}$.
On the other hand, similar to the construction of secure linear network codes, Guang et al. gave a lower bound \cite{GBY2021}
\begin{equation}\label{glbound}
  \widehat{C}(\mathcal{N},f,r)\geq C_{min}-r.
\end{equation}
Whether the upper bound is tight is not known yet.
In the scenario of user secure, Li et al. investigated the problem under the assumption of unlimited edge transmission capacity \cite{LL2022}.
A sufficient and necessary condition for the existence of user secure network codes was provided based on the network topology.

\section{An Improved Upper Bound}\label{s3}
We begin this section with an example which shows Guang's upper bound is not always tight.
\begin{figure}[htbp]
  \centering
  \includegraphics[width=0.5\textwidth]{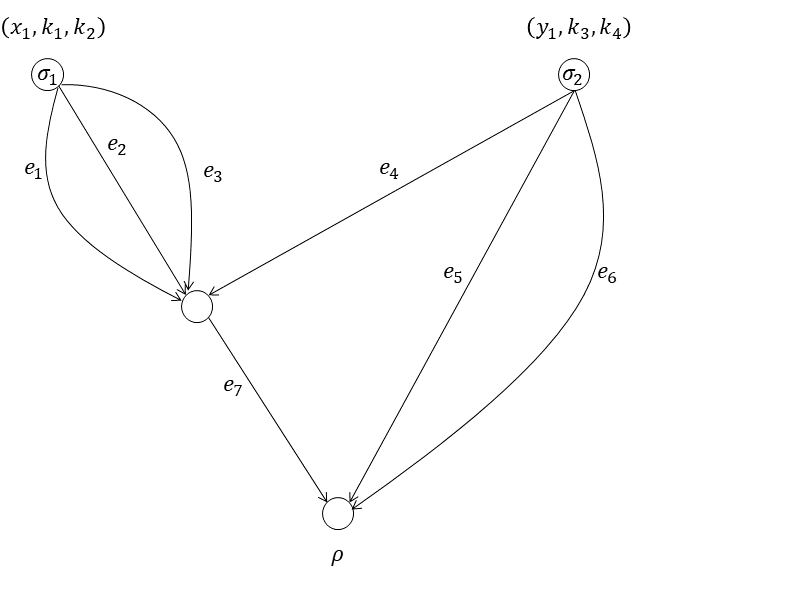}
  \caption{The network has 2 sources and the target function is $f=x_1+y_1$.}\label{F1}
\end{figure}
\begin{example}\label{e1}
  For the secure network function computation problem shown in $Fig.\ \ref{F1}$, the minimum cut and minimum $\sigma_i$-cut in this network have size 1 and 3 respectively, i.e. $C_{min}=1, D_{min}=3$.
  When the security level $r=2$, which means the wiretapper has access to any two edges, we have $\widehat{C}(\mathcal{N},f,r)\leq 1$ from (\ref{gubound}).
  In fact, there doesn't exist a secure code in this case.
  Suppose the wiretapper gets the messages transmitted by $e_4$ and $e_7$ which are denoted as $U_{e_4}$ and $U_{e_7}$.
  If there exists a secure network code, then $U_{e_4}$ is a function of the source message and the random key generated at $\sigma_2$, i.e. $U_{e_4}=g_1(\mathbf{y},\mathbf{k}')$.
  Similarly, $U_{e_7}=g_2(\mathbf{x},\mathbf{k},U_{e_4})$.
  Since the sink node can complete the computation and $U_{e_5}$, $U_{e_6}$ contain no information about $\mathbf{k}$, $U_{e_7}$ cannot contain any information about $\mathbf{k}$ either, i.e. $U_{e_7}=g_2(\mathbf{x},U_{e_4})$.
  Now, after getting $U_{e_4}$ and $U_{e_7}$ the wiretapper can get a function of $\mathbf{x}$, which contradicts the secure condition (\ref{secure}).
\end{example}

From the example, we know that Guang's upper bound is not tight although it is tight over some special $(\mathcal{N},f,r)$, see examples in \cite{GBY2021}.
For any cut $C$ in a network, let $I_C$ be the source nodes separated from $\rho$ by $C$, i.e.
\begin{center}
  $I_C=\{\sigma_i\in S:$ all\ paths\ from\ $\sigma_i$ to\ $\rho$ pass\ through\ $C\}$.
\end{center}
From a series of papers in network function computation \cite{AFK2011,HTYG2018,GYYL2019}, we know that each of those sources which can attach to $C$, i.e. there exists a path from the source to some edges in $C$, also matters to the capacity. Define
\begin{center}
  $J_C=\{\sigma_i\in S\setminus I_C:$ there\ exists\ a\ path\ from\ $\sigma_i$ to\ $C\}$.
\end{center}
Now we define a new edge set for each cut $C$ as follows. Define
\begin{center}
  $\widehat{B}=\min\{B\subset E:$ any\ path\ from\ $\sigma_i$ to $e$ pass\ through,\ $\forall\sigma_i\in J_C,\forall e\in C\}$,
\end{center}
where the minimum is taken with respect to the size of $B$.
For any cut $C$, let $A=C\cup\widehat{B}$, and define $A_{min}=\min_{C\in\Lambda}\{|A|:A=C\cup\widehat{B}\}$, where $\Lambda$ is the collection of all cut sets of the network.

We present our upper bound in the following theorem.
\begin{theorem}\label{th1}
  Let $\mathcal{N}$ be a network and $f$ be an algebraic sum over a finite field $\mathbb{F}_q$.
  If the security level $r\leq A_{min}$, then
  \begin{equation}\label{nubound}
    \widehat{C}(\mathcal{N},f,r)\leq\min\{C_{min},A_{min}-r\},
  \end{equation}
  and $\widehat{C}(\mathcal{N},f,r)=0$ otherwise.
\end{theorem}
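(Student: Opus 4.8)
The plan is to prove the two halves of $\widehat C(\mathcal N,f,r)\le\min\{C_{min},A_{min}-r\}$ separately. The bound $\widehat C(\mathcal N,f,r)\le C_{min}$ is immediate: a secure $(l,n)$ code is in particular a valid network code for computing $f$, so its rate cannot exceed the non-secure computation capacity, which equals $C_{min}$ by \cite{AF2013}. The substance is the bound $\widehat C(\mathcal N,f,r)\le A_{min}-r$, which I would obtain by an information-theoretic cut-set argument on the set $A=C\cup\widehat B$ for a cut $C$ attaining $A_{min}$, modelled on the classical $m-r$ cut bound for secure network coding: the $A_{min}$ edges of $A$ carry at most $A_{min}\,n\log q$ symbols of entropy, the secure condition forces $r$ of them to be useless for conveying the target, and decodability forces the remaining $A_{min}-r$ edges to convey the entire separated partial sum.

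Concretely, I would fix a valid $(l,n)$ secure code and a cut $C$ achieving $|C\cup\widehat B|=A_{min}$, write $G=\sum_{\sigma_i\in I_C}\mathbf M_i$ for the partial sum over the sources separated by $C$, and set $\mathbf Z=(\mathbf M_{S\setminus I_C},\mathbf K_{S\setminus I_C})$. Since the $\mathbf M_i$ are independent and uniform, $G$ is uniform on $\mathbb F_q^{\,l}$ and independent of $\mathbf Z$, so $H(G\mid\mathbf Z)=H(G)=l\log q$. For the \emph{decodability step} I would argue that, because every path from a source in $I_C$ to $\rho$ meets $C\subseteq A$, the received vector $\mathbf U_{In(\rho)}$ is a deterministic function of $(\mathbf U_A,\mathbf Z)$ alone, with no residual dependence on $(\mathbf M_{I_C},\mathbf K_{I_C})$. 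Feeding this through the decoder $\widehat\varphi$ recovers $f(\mathbf M_S)$, and subtracting the conditioned sources yields $H(G\mid\mathbf U_A,\mathbf Z)=0$; hence $H(G)=I(G;\mathbf U_A\mid\mathbf Z)$.

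Now pick any wiretap set $W\subseteq A$ with $|W|=r$, which exists because $r\le A_{min}=|A|$. Splitting the mutual information and bounding the remaining $A_{min}-r$ edges by their raw capacity gives $H(G)=I(G;\mathbf U_W\mid\mathbf Z)+I(G;\mathbf U_{A\setminus W}\mid\mathbf U_W,\mathbf Z)\le I(G;\mathbf U_W\mid\mathbf Z)+H(\mathbf U_{A\setminus W})\le I(G;\mathbf U_W\mid\mathbf Z)+(A_{min}-r)\,n\log q$. If the leading term vanishes, then $l\log q\le(A_{min}-r)n\log q$, i.e. $R=l/n\le A_{min}-r$, as required. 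The heart of the proof, and the step I expect to be hardest, is therefore establishing $I(G;\mathbf U_W\mid\mathbf Z)=0$. The secure condition only supplies the \emph{unconditional} independence $I(\mathbf M_S;\mathbf U_W)=0$, and conditioning on $\mathbf Z$ can a priori destroy it: if a code were allowed to mask the $I_C$-information carried on $C$ using a $J_C$-source \emph{message}, then revealing $\mathbf M_{S\setminus I_C}$ inside $\mathbf Z$ would unmask it and make $\mathbf U_W$ leak $G$. This is exactly where $\widehat B$ is indispensable. Since $\widehat B$ intercepts every path from a $J_C$-source to $C$ and $\widehat B\subseteq A$, all $J_C$-information reaching the cut is funnelled through $A$ itself; one can then show that, conditioned on $\mathbf Z$, the wiretapped symbols $\mathbf U_W$ depend on $\mathbf M_{I_C}$ only through key randomness that the decoder is forced to discard, so the conditional leakage is zero. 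This mirrors Example \ref{e1}, where the wiretapped $\widehat B$-edge $e_4$ carries precisely the would-be masking of the $I_C$-edge $e_7$; formalising that a secure code cannot launder $I_C$-information through decodable side information is the crux I would need to nail down.

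Finally, for the regime $r>A_{min}$ I would show that no secure code exists, so $\widehat C(\mathcal N,f,r)=0$. Already at $r=A_{min}$, taking $W=A$ forces $A\setminus W=\emptyset$, whence $H(G)\le I(G;\mathbf U_A\mid\mathbf Z)=0$ in the chain above and $l=0$. When $r$ strictly exceeds $A_{min}$ the wiretapper can observe all of $A$ together with at least one further edge, and the same structural analysis as in Example \ref{e1} shows that the $I_C$-information that decodability forces onto $C$ cannot be hidden once the $\widehat B$-bottleneck feeding $C$ is also observed, so $I(\mathbf M_S;\mathbf U_W)>0$ for every code. This yields the stated $\widehat C(\mathcal N,f,r)=0$ and completes the plan.
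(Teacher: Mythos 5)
Your overall architecture---an information-theoretic cut-set argument on $A=C\cup\widehat B$, splitting the mutual information into a wiretapped term and a residual term bounded by $(A_{min}-r)\,n\log q$---is the same as the paper's. But the step you yourself flag as the crux, namely $I(G;\mathbf{U}_W\mid\mathbf{Z})=0$ with $\mathbf{Z}=(\mathbf{M}_{S\setminus I_C},\mathbf{K}_{S\setminus I_C})$, is not merely left unproven in your write-up: it is \emph{false} as stated, precisely because your conditioning variable $\mathbf{Z}$ contains the messages \emph{and keys} of the $J_C$-sources. Take the network of Fig.~1 with $r=1$, where $C=\{e_7\}$, $I_C=\{\sigma_1\}$, $J_C=\{\sigma_2\}$, $\widehat B=\{e_4\}$, $A=\{e_4,e_7\}$. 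The achievable rate-$1$ secure code operates exactly as in Example~\ref{e1}: $U_{e_4}=\mathbf{y}+\mathbf{k}'$ and $U_{e_7}=\mathbf{x}+U_{e_4}=\mathbf{x}+\mathbf{y}+\mathbf{k}'$. This code satisfies the secure condition (\ref{secure}), yet for the admissible wiretap set $W=\{e_7\}\subseteq A$ one gets $I(G;\mathbf{U}_W\mid\mathbf{Z})=I(\mathbf{x};\,\mathbf{x}+\mathbf{y}+\mathbf{k}'\mid \mathbf{y},\mathbf{k}')=H(\mathbf{x})>0$: conditioning on the $J_C$-side key $\mathbf{k}'$ unmasks exactly the randomness that protects the cut edge. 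Since you explicitly allow ``any wiretap set $W\subseteq A$ with $|W|=r$,'' your chain collapses at this point, and the mechanism you invoke ($\widehat B$ ``funnelling'' the $J_C$-information) cannot repair it while $\mathbf{Z}$ is kept as you defined it; your argument for the $r>A_{min}$ case inherits the same defect.

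The paper threads this needle by a more careful choice of conditioning. It conditions only on $(\mathbf{M}_{S\setminus D},\mathbf{K}_{S\setminus D})$ with $D=I_C\cup J_C$, i.e., only on data of sources having \emph{no} path to $A$ at all, and takes a single source $\mathbf{M}_1$ with $I_C=\{\sigma_1\}$ as the target rather than the partial sum. Then: (i) since the secure condition is joint over all of $\mathbf{M}_S$, the chain rule gives $H(\mathbf{M}_1\mid\mathbf{M}_{S\setminus D},\mathbf{Y}_W)=H(\mathbf{M}_1\mid\mathbf{M}_{S\setminus D})$---conditioning on a sub-collection of \emph{messages} cannot destroy the independence; (ii) the keys $\mathbf{K}_{S\setminus D}$ may be adjoined for free because every edge of $A$ carries a function of $(\mathbf{M}_D,\mathbf{K}_D)$ only, which is where the definition of $\widehat B$ (and hence of $A$) actually enters; and (iii) decodability under this \emph{weaker} conditioning is recovered by passing, via independence, to conditioning on $(\mathbf{M}_{S\setminus\{1\}},\mathbf{K}_{S\setminus\{1\}})$, adjoining $\mathbf{Y}_{C'}$ with $C'=\bigcup_{i\neq 1}Out(\sigma_i)$, and observing that $A\cup C'$ is a global cut from which $f(\mathbf{M}_S)$, and hence $\mathbf{M}_1$, is computable. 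Your proposal conditions on too much (making the security step false) in order to make decodability trivial; the paper conditions on less and pays for it in the decodability step instead. That trade-off is the actual content of the proof, and it is the part missing from your proposal.
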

\begin{proof}
  For the secure network function computation problem $(\mathcal{N},f,r)$, suppose there exists an $(l,n)$ secure code.
  To prove the theorem, we only need to show $l/n\leq A_{min}-r$.

  Let the message generated at source node $\sigma_i$ be a random variable $\mathbf{X}_i=(\mathbf{M}_i,\mathbf{K}_i)=(M_{i1},\cdots,M_{il},K_{i1},\cdots,K_{ir})$, where $M_{ij},K_{ij'}, i\in[s],\ j\in[l],\ j'\in[r]$ are generated independently and uniformly at random over $\mathbb{F}_q$.
  Then by the definition of entropy function, we have
  \begin{equation*}
    \mathbf{H}(\mathbf{M}_i)=l\cdot\mathbf{H}(M_{ij})=l\cdot\log q,
  \end{equation*}
  where the logarithm base is two. The message transmitted by any edge $e$ is denoted as $U_{e}\in\mathbb{F}_q^n$.

  Let $A=C\cup B$ be the edge set with size $A_{min}$, where $C$ is a cut of the network and B is the minimum cut separating $J_C$ and $C$.
  Without loss of generality, we assume $I_C=\{\sigma_1\}$, and consider a wiretap set $W\subset A$ with size $r$.
  Let $\mathbf{Y}_W=\{U_e:e\in W\}$, and $\mathbf{Y}_A=\{U_e:e\in A\}$.
  Since each source generates message independently, we have
  \begin{equation}\label{p1}
    l\cdot\log q=\mathbf{H}(\mathbf{M}_1)=\mathbf{H}(\mathbf{M}_1|\mathbf{M}_{S\setminus D}),
  \end{equation}
  where $D=J_C\cup\{\sigma_1\}$. The secure condition (\ref{secure}) implies
  \begin{center}
    $\mathbf{H}(\mathbf{M}_1,\mathbf{M}_{S\setminus D}|\mathbf{Y}_W)=\mathbf{H}(\mathbf{M}_1,\mathbf{M}_{S\setminus D}),$\\
    $\mathbf{H}(\mathbf{M}_{S\setminus D}|\mathbf{Y}_W)=\mathbf{H}(\mathbf{M}_{S\setminus D})$.
  \end{center}
  Expanding both sides of the first equation, we get
  \begin{align*}
    \mathbf{H}(\mathbf{M}_1|\mathbf{M}_{S\setminus D},\mathbf{Y}_W)+ \mathbf{H}(\mathbf{M}_{S\setminus D}|\mathbf{Y}_W)
    & =\mathbf{H}(\mathbf{M}_1,\mathbf{M}_{S\setminus D}|\mathbf{Y}_W)\\
    & =\mathbf{H}(\mathbf{M}_1,\mathbf{M}_{S\setminus D})\\
    & =\mathbf{H}(\mathbf{M}_1|\mathbf{M}_{S\setminus D})+\mathbf{H}(\mathbf{M}_{S\setminus D}),
  \end{align*}
  which shows
  \begin{equation}\label{p2}
    \mathbf{H}(\mathbf{M}_1|\mathbf{M}_{S\setminus D},\mathbf{Y}_W) = \mathbf{H}(\mathbf{M}_1|\mathbf{M}_{S\setminus D}).
  \end{equation}
  By the definition of edge set $A$, we know that every edge in $A$ transmits a function of messages generated at sources in $D=J_C\cup\{\sigma_1\}$, which means $\mathbf{Y}_W$ is a function of $\mathbf{M}_D$ and $\mathbf{K}_D$. Then,
  \begin{equation}\label{p3}
    \mathbf{H}(\mathbf{M}_1|\mathbf{M}_{S\setminus D},\mathbf{Y}_W) = \mathbf{H}(\mathbf{M}_1|\mathbf{M}_{S\setminus D},\mathbf{Y}_W,\mathbf{K}_{S\setminus D}).
  \end{equation}
  We claim that
  \begin{equation}\label{p4}
    \mathbf{H}(\mathbf{M}_1|\mathbf{M}_{S\setminus D},\mathbf{Y}_A,\mathbf{K}_{S\setminus D})=0.
  \end{equation}
  This claim can be shown as follows
  \begin{align*}
    \mathbf{H}(\mathbf{M}_1|\mathbf{M}_{S\setminus D},\mathbf{Y}_A,\mathbf{K}_{S\setminus D})
    & = \mathbf{H}(\mathbf{M}_1|\mathbf{M}_{S\setminus \{1\}},\mathbf{Y}_A,\mathbf{K}_{S\setminus \{1\}})\\
    & = \mathbf{H}(\mathbf{M}_1|\mathbf{M}_{S\setminus \{1\}},\mathbf{Y}_A,\mathbf{K}_{S\setminus \{1\}},\mathbf{Y}_{C'})\\
    & = \mathbf{H}(\mathbf{M}_1|\mathbf{M}_{S\setminus \{1\}},\mathbf{K}_{S\setminus \{1\}},f(\mathbf{M}_1,\cdots,\mathbf{M}_s))\\
    & \leq \mathbf{H}(\mathbf{M}_1|\mathbf{M}_{S\setminus \{1\}},f(\mathbf{M}_1,\cdots,\mathbf{M}_s))=0,
  \end{align*}
  where $C'=\bigcup_{i\in S\setminus \{1\}}Out(\sigma_i)$.
  The first equality holds for the independence of source messages and keys.
  By the definition of $C'$, $\mathbf{Y}_{C'}$ is a function of $\mathbf{M}_{S\setminus \{1\}}$ and $\mathbf{K}_{S\setminus \{1\}}$, then the second equality follows.
  The third equation holds since that $A\cup C'$ is a global cut of the network, the target function can be computed for any global cut messages.
  And the last equation is from that $f(\mathbf{M}_1,\cdots,\mathbf{M}_s)=\mathbf{M}_1+\cdots+\mathbf{M}_s$ and $\mathbf{M}_{S\setminus \{1\}}$ is known.

  Combining (\ref{p1}), (\ref{p2}), (\ref{p3}), (\ref{p4}), we get
  \begin{align*}
    l\cdot\log q
    & = \mathbf{H}(\mathbf{M}_1|\mathbf{M}_{S\setminus D},\mathbf{Y}_W,\mathbf{K}_{S\setminus D})\\
    & = \mathbf{H}(\mathbf{M}_1|\mathbf{M}_{S\setminus D},\mathbf{Y}_W,\mathbf{K}_{S\setminus D})-
      \mathbf{H}(\mathbf{M}_1|\mathbf{M}_{S\setminus D},\mathbf{Y}_A,\mathbf{K}_{S\setminus D})\\
    & = \mathbf{I}(\mathbf{M}_1;\mathbf{Y}_{A\setminus W}|\mathbf{M}_{S\setminus D},\mathbf{K}_{S\setminus D})\\
    & \leq  \mathbf{H}(\mathbf{Y}_{A\setminus W}) \leq (|A|-r)\cdot n\cdot\log q.
  \end{align*}
  Then we can conclude that
  \begin{equation*}
    \frac{l}{n}\leq |A|-r = A_{min}-r.
  \end{equation*}
\end{proof}
\begin{remark}
  Comparing with Guang's upper bound (\ref{gubound}), we only change $D_{min}$ to $A_{min}$.
  By the definition of $A_{min}$, if $C$ is a $\sigma_i$-cut of the network, then $B=\emptyset$ since $J_C=\emptyset$, then $A=C\cup B=C$.
  Thus $D_{min}$ is the minimum size of $A$, where the minimum is taken over all $\sigma_i$-cut, and $A_{min}$ is the minimum taken over all cut.
  Therefore, $A_{min}\leq D_{min}$ in general, which implies our new upper bound is tighter than the previous one.
\end{remark}
\begin{example}
  Consider the network function computation problem shown in $Fig.\ \ref{F1}$.
  For $r=1$, our new upper bound shows $\widehat{C}(\mathcal{N},f,1)\leq 1$, which is achievable and the secure code is the same as the example in \cite{GBY2021}.
  For $r=2$, applying Theorem \ref{th1}, we have $\widehat{C}(\mathcal{N},f,2)\leq 0$, since $A_{min}=2$ for the network.
  This also matches our analysis in Example \ref{e1}.
\end{example}
\section{Lower Bound}\label{s4}
For the lower bound of the secure computing capacity over a network, we need to construct an algorithm to generate a secure $(l,n)$-code for any given network, target function and security level, which is difficult in general.
For non-secure case, most known results about lower bound are obtained by connecting the problem with equivalent network coding problem \cite{RD2012,KEH2004}.
Similar to the construction of secure network coding problem, Guang et al. \cite{GBY2021} provided a lower bound with rate $C_{min}-r$ by a linear transformation at the source nodes.
In the following, we address a sufficient and necessary condition of the linear transformation which turns a non-secure network code into a secure one, and prove that for multi-edge tree graph, the upper bound in Theorem \ref{th1} is tight.

Recall that an $(l,n)$ network code can be represented by global encoding functions for all edges in the network, i.e. $\{\widehat{g}_e:e\in E\}$.

We denote the message generated at each source node $\sigma_i$ as $\mathbf{M}_i\in\mathbb{F}_q^{n-r}$, and denote the random key as $\mathbf{K}_i\in\mathbb{F}_q^{r}$ in this section.
Each edge $e\in E$ transmits a message of length 1 over $\mathbb{F}_q$.
Such a linear secure code has rate $n-r$, and each global encoding function can be represented by a column vector $g_e\in \mathbb{F}_q^{sn}$, i.e.
\begin{equation*}
  \widehat{g}_e(\mathbf{M}_1,\mathbf{K}_1,\cdots,\mathbf{M}_s,\mathbf{K}_s)=(\mathbf{M}_1,\mathbf{K}_1,\cdots,\mathbf{M}_s,\mathbf{K}_s)\cdot g_e.
\end{equation*}
A linear transformation at all source nodes is an $sn\times sn$ matrix $A$ which is a block diagonal matrix,
\begin{equation*}
  A=\begin{pmatrix}
      A_1 & 0 & \cdots & 0 \\
      0 & A_2 & \cdots & 0 \\
      \vdots & \vdots & \ddots & \vdots \\
      0 & 0 & \cdots & A_s
    \end{pmatrix},
\end{equation*}
where $A_i$ is an $n\times n$ transformation matrix at source node $\sigma_i$.
For $T\subset[sn]$, let $A_T$ be the matrix obtained by selecting the columns of $A$ corresponding to $T$.
For instance, $A_{\{1\}}$ is the first column of $A$.
Then, we can give a sufficient and necessary condition for the linear transformation transforming a non-secure code into a secure one.

\begin{theorem}\label{th2}
  Given the network $\mathcal{N}$, sum function $f$, and security level $r$, if there exists a non-secure network code $\{\widehat{g}_e:e\in E\}$ with rate $R=n$, then it can be transformed into a secure network code with rate $n-r$ by a linear transformation $A$ if and only if $A$ satisfies
  \begin{equation}\label{condition}
    span(A_T^{-1})\cap span(F(W))=\mathbf{0},
  \end{equation}
  where $T=\bigcup_{i=0}^{s-1}\{i+1,i+2,\cdots,i+n-r\}$.
\end{theorem}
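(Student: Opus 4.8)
The plan is to translate the information-theoretic secure condition (\ref{secure}) into a statement about trivial intersection of subspaces, and then obtain the stated form by an invertible change of basis. Write $\mathbf{v}=(\mathbf{M}_1,\mathbf{K}_1,\dots,\mathbf{M}_s,\mathbf{K}_s)\in\mathbb{F}_q^{sn}$; since all message and key symbols are independent and uniform, $\mathbf{v}$ is uniform over $\mathbb{F}_q^{sn}$. After the precoding $A$ is applied at the sources, edge $e$ carries $\mathbf{v}(A g_e)=(\mathbf{v}A)\cdot g_e$, so the non-secure code is effectively fed $\mathbf{v}A$ and the wiretapper on $W$ observes $\mathbf{Y}_W=\mathbf{v}\,A\,G_W$, where $G_W$ is the matrix whose columns are the non-secure global vectors $\{g_e:e\in W\}$ (so that $\mathrm{span}(F(W))$ is the column space of $G_W$). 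The full source message is the coordinate projection $\mathbf{M}_S=\mathbf{v}P_T$, where $P_T$ selects the columns indexed by $T$, the $s(n-r)$ message positions (the first $n-r$ coordinates of each length-$n$ source block). The goal reduces to showing that, for every $W\in\mathcal{W}_r$, the condition $\mathbf{I}(\mathbf{M}_S;\mathbf{Y}_W)=0$ is equivalent to (\ref{condition}).

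The engine is a standard independence criterion for linear images of a uniform vector: for matrices $L_1,L_2$ with $sn$ rows, $\mathbf{v}L_1$ and $\mathbf{v}L_2$ are independent if and only if their column spaces meet only at the origin. This follows from the entropy computation
\[
\mathbf{I}(\mathbf{v}L_1;\mathbf{v}L_2)=\big(\mathrm{rank}(L_1)+\mathrm{rank}(L_2)-\mathrm{rank}[\,L_1\mid L_2\,]\big)\log q=\dim\!\big(\mathrm{colspace}(L_1)\cap\mathrm{colspace}(L_2)\big)\log q,
\]
valid because each linear image of the uniform $\mathbf{v}$ is uniform on the corresponding column space. Hence $\mathbf{I}(\mathbf{v}L_1;\mathbf{v}L_2)=0$ exactly when $\mathrm{colspace}(L_1)\cap\mathrm{colspace}(L_2)=\mathbf{0}$, which gives both directions of the desired equivalence at once.

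Applying this with $L_1=P_T$ and $L_2=AG_W$ yields security for $W$ iff $E_T\cap A\,\mathrm{span}(F(W))=\mathbf{0}$, where $E_T=\mathrm{span}\{\mathbf{e}_t:t\in T\}=\mathrm{colspace}(P_T)$ and $\mathrm{colspace}(AG_W)=A\,\mathrm{span}(F(W))$ since $A$ is invertible. Because $A^{-1}$ is a linear isomorphism it preserves trivial intersections, so this is equivalent to $A^{-1}E_T\cap\mathrm{span}(F(W))=\mathbf{0}$; and $A^{-1}E_T=\mathrm{span}\{A^{-1}\mathbf{e}_t:t\in T\}=\mathrm{span}((A^{-1})_T)=\mathrm{span}(A_T^{-1})$, giving precisely (\ref{condition}). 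Quantifying over all $W\in\mathcal{W}_r$ then shows the transformed code is secure against every admissible wiretap set iff (\ref{condition}) holds for each such $W$.

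The step I expect to require the most care is not the security equivalence above, which is a one-line consequence of the independence criterion, but verifying that the precoding preserves \emph{decodability}, i.e.\ that the sink can still recover $f(\mathbf{M}_S)=\sum_i\mathbf{M}_i$. The non-secure decoder only delivers $\sum_i(\mathbf{M}_i,\mathbf{K}_i)A_i$, and extracting $\sum_i\mathbf{M}_i$ from this (independently of the keys) forces a fixed linear functional to recover the message sum across all sources, which constrains the message-indexed columns of the blocks $A_i^{-1}$. I would discharge this by exhibiting the admissible class of block transformations for which such recovery is guaranteed and checking that this decodability bookkeeping is compatible with, and does not interfere with, the intersection condition (\ref{condition}); the remainder is then routine.
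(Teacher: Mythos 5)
Your proposal is correct and takes essentially the same route as the paper: both arguments reduce the secure condition to the statement that the message-coordinate subspace and the wiretapped subspace $A\cdot\mathrm{span}(F(W))$ intersect trivially and then pull this back through the invertible map $A^{-1}$ to get (\ref{condition}), with your rank/entropy independence criterion serving as a rigorous justification of the step the paper merely asserts (the equivalence of (\ref{secure}) with the nonexistence of nontrivial solutions to (\ref{dj1})). The decodability issue you flag at the end is a genuine concern, but it is equally unaddressed in the paper's own proof, which establishes only the security equivalence (in fact recovery of $\sum_i\mathbf{M}_i$ from $\sum_i(\mathbf{M}_i,\mathbf{K}_i)A_i$ needs an extra constraint, e.g.\ taking all blocks $A_i$ equal), so your attempt covers everything the paper's proof does.
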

\begin{proof}
  Denote all source messages and random keys as a row vector of length $sn$ by $\mathbf{X}=(\mathbf{M}_1,\mathbf{K}_1,\cdots,\mathbf{M}_s,\mathbf{K}_s)$.
  By the definition of global encoding functions, before transformation, edge $e\in E$ sends a message $U_e=\mathbf{X}\cdot g_e$.
  After transformation, the message sent by edge $e$ is $U_e=\mathbf{X}\cdot A \cdot g_e$.
  Let $W\subset E$ be a wiretap edge subset, $G(W)=\{g_e:e\in W\}$, the wiretapper can get
  \begin{equation*}
    \mathbf{Y}_W=\mathbf{X}\cdot A \cdot G(W).
  \end{equation*}
  Here, the security condition
  \begin{equation*}
    \mathbf{I}(\mathbf{M}_1,\mathbf{M}_2,\cdots,\mathbf{M}_s;\mathbf{Y}_W)=0
  \end{equation*}
  is equivalent to that there doesn't exist any vector $\mathbf{a}\in\mathbb{F}_q^{s(n-r)},\mathbf{b}\in\mathbb{F}_q^{|W|}$ such that
  \begin{equation}\label{dj1}
    (\mathbf{M}_1,\mathbf{M}_2,\cdots,\mathbf{M}_s)\cdot \mathbf{a}= \mathbf{Y}_W\cdot \mathbf{b},
  \end{equation}
  since
  \begin{equation*}
    (\mathbf{M}_1,\mathbf{M}_2,\cdots,\mathbf{M}_s)=\mathbf{X}\cdot\begin{pmatrix}
                                                                     J_{n,n-r} & 0 & \cdots & 0 \\
                                                                     0 & J_{n,n-r} & \cdots & 0 \\
                                                                     \vdots & \vdots & \ddots & \vdots \\
                                                                     0 & 0 & \cdots & J_{n,n-r}
                                                                   \end{pmatrix},
  \end{equation*}
  where $J_{n,n-r}$ is the first $n-r$ columns of an identity matrix of order $n$.
  From above, Eq. (\ref{dj1}) is equivalent to
  \begin{equation}\label{dj2}
    \mathbf{X}\cdot\begin{pmatrix}
    J_{n,n-r} & 0 & \cdots & 0 \\
    0 & J_{n,n-r} & \cdots & 0 \\
    \vdots & \vdots & \ddots & \vdots \\
    0 & 0 & \cdots & J_{n,n-r}
    \end{pmatrix}\cdot \mathbf{a}= \mathbf{X}\cdot A \cdot G(W)\cdot \mathbf{b},
  \end{equation}
  then we have
  \begin{equation}\label{dj3}
    A^{-1}\cdot\begin{pmatrix}
    J_{n,n-r} & 0 & \cdots & 0 \\
    0 & J_{n,n-r} & \cdots & 0 \\
    \vdots & \vdots & \ddots & \vdots \\
    0 & 0 & \cdots & J_{n,n-r}
    \end{pmatrix}\cdot \mathbf{a}= G(W)\cdot \mathbf{b},
  \end{equation}
  the lefthand side of Eq. (\ref{dj3}) can be represented by $A_T^{-1}$, where $T=\bigcup_{i=0}^{s-1}\{i+1,i+2,\cdots,i+n-r\}$.
  This is equivalent to $span(A_T^{-1})\cap span(F(W))=\mathbf{0}$.
\end{proof}

For sum-network, the capacity of network $\mathcal{N}$ is equal to $C_{min}$ \cite{RD2012,KEH2004}.
Thus from Theorem \ref{th2}, for secure network sum function computing problem, we have $\widehat{C}(\mathcal{N},f,r)\geq C_{min}-r$.

We call a graph $\mathcal{N}=(V,E)$ as multi-edge tree if for any node $v\in V$, there exists a node $u$ such that all the out edges of $v$ are in edges of $u$, which means $Out(v)\subset In(u)$. We obtain the following corollary.

\begin{corollary}
  For multi-edge tree graphs, the upper bound in Theorem \ref{th1} is tight.
\end{corollary}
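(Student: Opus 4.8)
The plan is to prove the corollary by combining the upper bound from Theorem~\ref{th1} with a matching achievability construction specialized to multi-edge trees, using the characterization of secure linear transformations in Theorem~\ref{th2}. First I would establish that for a multi-edge tree graph, the combinatorial quantity $A_{min}$ coincides with $C_{min}$. The defining property $Out(v)\subset In(u)$ forces every edge carrying information toward $\rho$ to funnel through a common successor node, so that for any cut $C$, the set $\widehat{B}$ of edges controlling the paths from the attachable sources $J_C$ to $C$ adds nothing beyond $C$ itself; intuitively, in a tree structure there is a unique route from each source into the cut, and the minimum cut already captures all the bottleneck edges. Hence $A_{min}=C_{min}$, and Theorem~\ref{th1} reduces to $\widehat{C}(\mathcal{N},f,r)\leq C_{min}-r$.

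Next I would establish the matching lower bound $\widehat{C}(\mathcal{N},f,r)\geq C_{min}-r$. The starting point is the non-secure achievability result for sum-networks, namely that a linear network code achieving rate $R=n=C_{min}$ exists \cite{RD2012,KEH2004}. I would then invoke Theorem~\ref{th2}: it suffices to exhibit a block-diagonal linear transformation $A$ at the source nodes such that $span(A_T^{-1})\cap span(F(W))=\mathbf{0}$ for every wiretap set $W$ with $|W|\leq r$. The key structural feature to exploit here is that, in a multi-edge tree, the global encoding vectors $g_e$ of any set of $r$ edges the wiretapper can observe span a subspace of controlled dimension, and the funnel property restricts which combinations of source coordinates can actually appear together on a bounded edge set. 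This should let me choose each $A_i$ so that the key coordinates (the last $r$ of the $n$ columns for each source) scramble the observable messages, forcing the intersection in \eqref{condition} to be trivial.

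The main obstacle I anticipate is verifying condition \eqref{condition} uniformly over all wiretap sets $W$ simultaneously, rather than for one fixed $W$. For a generic network this is exactly where the gap between $A_{min}$ and $C_{min}$ would bite, but the multi-edge tree hypothesis is precisely what makes it tractable: because the out-edges of each node share a common head, the set $F(W)$ of observable encoding vectors has a nested or laminar structure tied to the tree, so I expect a single generic (or suitably randomized, MDS-type) choice of the $A_i$ to work against all $W$ at once by a dimension-counting argument. Concretely, I would argue that $span(F(W))$ has dimension at most $r$ while $span(A_T^{-1})$ has dimension $s(n-r)$, and that the tree structure guarantees these live in complementary positions whenever $A$ is chosen generically over a sufficiently large field $\mathbb{F}_q$.

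Assembling the two bounds, $C_{min}-r\leq \widehat{C}(\mathcal{N},f,r)\leq A_{min}-r=C_{min}-r$, yields equality, so the upper bound of Theorem~\ref{th1} is tight for multi-edge trees. I would also note the boundary case: if $r>C_{min}=A_{min}$, both the upper bound and the construction degenerate, consistent with $\widehat{C}(\mathcal{N},f,r)=0$ as stated in Theorem~\ref{th1}.
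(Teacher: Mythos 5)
Your proposal follows essentially the same route as the paper's own proof: first show that for multi-edge trees the combinatorial bound collapses, $A_{min}=C_{min}$ (the paper argues this by noting that $J_C=\emptyset$, hence $\widehat{B}=\emptyset$, for the relevant cuts), and then obtain the matching lower bound $C_{min}-r$ by applying Theorem \ref{th2} to a rate-$C_{min}$ non-secure linear code. One caveat: your third paragraph over-credits the tree hypothesis on the achievability side --- the lower bound $\widehat{C}(\mathcal{N},f,r)\geq C_{min}-r$ in (\ref{glbound}) holds for arbitrary networks, so the multi-edge tree structure is needed only to make the upper bound of Theorem \ref{th1} drop to $C_{min}-r$, not to make condition (\ref{condition}) verifiable over all wiretap sets.
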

\begin{proof}
  In multi-edge tree graphs, for any cut $C$, $J_{C}=\emptyset$. Thus, $A_{min}=C_{min}$, and by Theorem \ref{th1}, the capacity is $C_{min}-r$.
  And in Theorem \ref{th2}, there exists a secure network code with rate $C_{min}-r$.
\end{proof}

For general network which satisfies $A_{min}>C_{min}$, whether the upper bound (\ref{nubound}) is tight or not is not known yet. But we can provide some necessary conditions for linear secure global encoding functions.
Let $\widehat{g}_e(\mathbf{M}_1,\mathbf{K}_1,\cdots,\mathbf{M}_s,\mathbf{K}_s)=(\mathbf{M}_1,\mathbf{K}_1,\cdots,\mathbf{M}_s,\mathbf{K}_s)\cdot g_e$, where $g_e$ is a column vector with length $sn$.
For any source node $\sigma_i$, let $\widehat{g}_e|_{\{\sigma_i\}}$ be the global encoding function restricted on $\mathbf{M}_i,\mathbf{K}_i$, i.e.
\begin{equation*}
  \widehat{g}_e|_{{\sigma_i}}(\mathbf{M}_i,\mathbf{K}_i)=(\mathbf{M}_i,\mathbf{K}_i)\cdot g_e|_{\{\sigma_i\}}.
\end{equation*}
\begin{theorem}
  If there exists an $(n-r,1)$ linear secure network code $\{\widehat{g}_e:e\in E\}$ for sum-network problem $(\mathcal{N},f,r)$, then for any cut $C$, the messages $\{\widehat{g}_e|_{I_C}:e\in C\}$ contain all information about $\sum_{\sigma_i\in I_C}\mathbf{M}_i$.
\end{theorem}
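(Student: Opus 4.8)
The plan is to prove the claim by combining the linearity of the code, a ``zeroing out'' of the sources outside $I_C$, and the decoding requirement at $\rho$. Since the code is linear, for each edge $e$ the transmitted message splits as $U_e=(\mathbf{M}_1,\mathbf{K}_1,\cdots,\mathbf{M}_s,\mathbf{K}_s)\cdot g_e=\sum_{i=1}^{s}\widehat{g}_e|_{\{\sigma_i\}}(\mathbf{M}_i,\mathbf{K}_i)$. The first step is to set $\mathbf{M}_j=\mathbf{0}$ and $\mathbf{K}_j=\mathbf{0}$ for every $\sigma_j\notin I_C$. Because $\widehat{g}_e|_{\{\sigma_j\}}(\mathbf{0},\mathbf{0})=\mathbf{0}$, in this restricted scenario every edge carries exactly $U_e=\widehat{g}_e|_{I_C}(\mathbf{M}_{I_C},\mathbf{K}_{I_C})$, where $\mathbf{M}_{I_C}$ and $\mathbf{K}_{I_C}$ collect the messages and keys of the sources in $I_C$, and the target value reduces to $f(\mathbf{M}_S)=\sum_{\sigma_i\in I_C}\mathbf{M}_i$. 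Hence it suffices to show that in this scenario the cut messages $\{U_e:e\in C\}$ determine $\sum_{\sigma_i\in I_C}\mathbf{M}_i$.

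The second step is a cut-set argument. By the definition of $I_C$, the set $C$ separates every source in $I_C$ from $\rho$, so deleting $C$ leaves no path from any $\sigma_i\in I_C$ to $\rho$. Using the standard recursion $g_e|_{I_C}=\sum_{d\in In(tail(e))}\beta_{d,e}\, g_d|_{I_C}$ for a non-source edge $e$ (with local coefficients $\beta_{d,e}\in\mathbb{F}_q$), together with the base cases $g_e|_{I_C}$ supported on the coordinates of $\sigma_i$ when $tail(e)=\sigma_i\in I_C$ and $g_e|_{I_C}=\mathbf{0}$ when $tail(e)=\sigma_j\notin I_C$, I would prove by induction along a topological order of the edges that for every edge $e$ all of whose paths from $I_C$ pass through $C$---in particular for every $e\in In(\rho)$---one has $g_e|_{I_C}\in span\{g_{e'}|_{I_C}:e'\in C\}$. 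Equivalently, with the sources outside $I_C$ zeroed out, each $U_e$ with $e\in In(\rho)$ is a linear function of $\{U_{e'}:e'\in C\}$.

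The final step invokes decodability. Since the $(n-r,1)$ code computes the target function without error, the decoding map recovers $f=\sum_{\sigma_i\in I_C}\mathbf{M}_i$ from $\{U_e:e\in In(\rho)\}$, and by the second step these inputs are themselves determined by $\{U_{e'}:e'\in C\}=\{\widehat{g}_{e'}|_{I_C}(\mathbf{M}_{I_C},\mathbf{K}_{I_C}):e'\in C\}$. Composing the two maps shows that $\sum_{\sigma_i\in I_C}\mathbf{M}_i$ is a function of the cut messages, i.e. $\mathbf{H}\big(\sum_{\sigma_i\in I_C}\mathbf{M}_i\mid\{\widehat{g}_{e'}|_{I_C}:e'\in C\}\big)=0$, which is exactly the assertion that they contain all information about $\sum_{\sigma_i\in I_C}\mathbf{M}_i$. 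I expect the main obstacle to be the span/cut-set step: one must argue carefully that once the sources outside $I_C$ (notably those in $J_C$, which can feed edges of $C$) are set to zero, no contribution originating outside $I_C$ survives on or below $C$, so that the defining property of a cut together with the topological induction genuinely rules out any ``leakage'' of the $I_C$-information to $\rho$ around $C$.
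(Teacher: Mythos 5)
Your proposal is correct, but it takes a genuinely different route from the paper's own proof. The paper argues by contradiction through an augmented global cut: assuming the truncated messages $\{\widehat{g}_e|_{I_C}:e\in C\}$ do not determine $\sum_{\sigma_i\in I_C}\mathbf{M}_i$, it asserts that some keys $\mathbf{K}_i$ with $\sigma_i\in I_C$ ``cannot be eliminated,'' then observes that $G=C\cup\bigl(\bigcup_{\sigma_i\notin I_C}Out(\sigma_i)\bigr)$ is a global cut whose added edges carry no trace of those keys, so the sum could not be decoded from $\mathbf{Y}_G$ either, contradicting solvability. Your argument is direct instead: you zero out the messages and keys of all sources outside $I_C$ (legitimate because the decoding condition (\ref{4}) must hold for every input), which by linearity collapses each $U_e$ to $\widehat{g}_e|_{I_C}(\mathbf{M}_{I_C},\mathbf{K}_{I_C})$ and collapses the target to the partial sum, and you then prove by topological induction that $g_e|_{I_C}\in span\{g_{e'}|_{I_C}:e'\in C\}$ for every edge all of whose paths from $I_C$ meet $C$ --- in particular for $In(\rho)$ --- so the decoder's input, hence its output, factors through the truncated cut messages. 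What each approach buys: the paper's proof is shorter but leaves its two key assertions unproved (that failure to decode means non-eliminable key terms, and that such terms survive to block decoding at $G$), whereas your version makes the cut-set step precise (your span lemma is exactly what the paper implicitly needs when it claims the sum is decodable from $\mathbf{Y}_G$) and avoids reasoning about keys altogether, at the cost of the bookkeeping in the induction. Two details to tighten when writing it up: edges with no path from $I_C$ at all should be treated as vacuously satisfying the hypothesis (their restricted vectors are $\mathbf{0}$, by the same induction), and an edge with $tail(e)=\sigma_i\in I_C$ satisfying your hypothesis must itself lie in $C$, so that base case is trivial.
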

\begin{proof}
  Suppose we cannot decode $\sum_{\sigma_i\in I_C}\mathbf{M}_i$ from the truncated messages $\{\widehat{g}_e:e\in C\}$.
  Then there must be some random keys from source nodes $\sigma_i\in I_C$ we cannot eliminate.
  Let $G=C\cup(\cup_{\sigma_i\notin I_C}Out(\sigma_i))$ be a global cut, we can decode the sum from the messages transmitted on these edges.
  However, $\cup_{\sigma_i\notin I_C}Out(\sigma_i)$ don't contain any message about $\mathbf{K}_i$, for $\sigma_i\in I_C$, which means the random keys in $\{\widehat{g}_e:e\in C\}$ cannot be eliminated. Then it's impossible to obtain the sum from the global cut.
\end{proof}

For the case $A_{min}\geq C_{min}$, if the wiretapper has access to a cut $C$ of the network with size less than $A_{min}$, he cannot get any information about any source message from the secure condition. However, from above theorem, the sum $\sum_{\sigma_i\in I_C}\mathbf{M}_i$ can be obtained from truncated messages $\{\widehat{g}_e|_{I_C}\:e\in C\}$, which indicates that the messages in the min-cut from $J_C$ to $C$ are served as random keys to protect the sum from the wiretapper.

\section{Upper Bound under User Secure}\label{s5}
In this section, we consider the network function computation problem under the user secure condition (\ref{usecure}).
Li et al. referred this kind of security as secrete sum when the target function is arithmetic sum over finite field \cite{LL2022}.
In their model, each edge of the network has unlimited transmission capacity, which means each edge can transmit a message over $\mathbb{F}_q$ with unlimited length once a time.
Therefore, it is meaningless to bound the capacity under their model.
In \cite{LL2022}, the sufficient and necessary condition for a user secure network function computation problem is that after deleting all wiretap edges, there exists at least one source having a path to the sink node $\rho$.

Now, we assume the transmission capacity of each edge is unit, i.e. each edge transmits one symbol over $\mathbb{F}_q$ at one time.
What is the relation between network topology and computation capacity under a given security level $r$?
Here, we will use $\widehat{C}_{us}(\mathcal{N},f,r)$ to denote the capacity of the network $\mathcal{N}$ computing $f$ with security level $r$.
We first address a sufficient and necessary condition for the existence of user secure codes and then provide an upper bound on the capacity based on network topology.
We begin with the following illustrative examples.
\begin{figure}[htbp]
  \centering
  \includegraphics[width=0.4\textwidth]{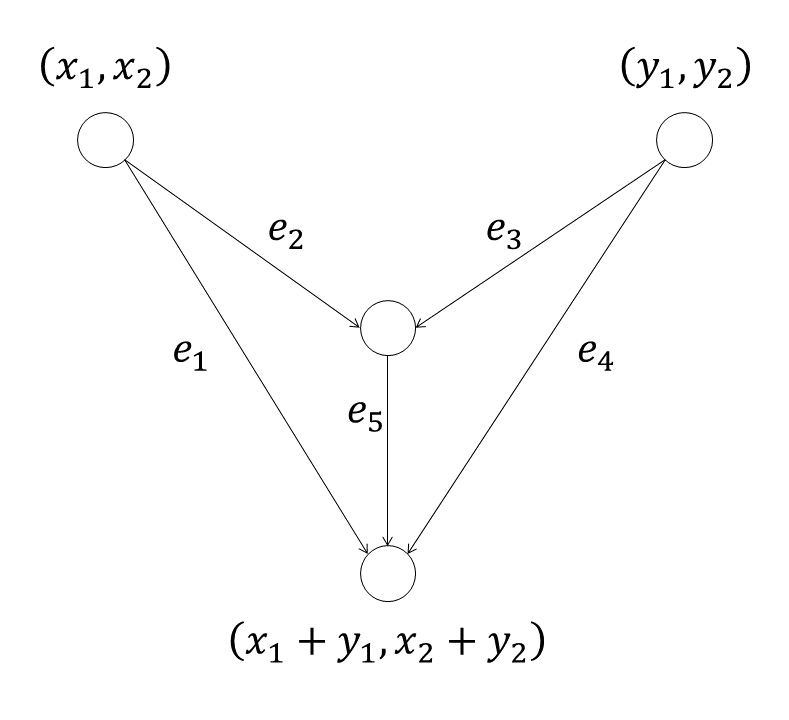}
  \caption{The network has 2 sources and the target function is $f=(x_1+y_1,x_2+y_2)$.}\label{F2}
\end{figure}

\begin{example}\label{e2}
  Consider the user secure network function computation problem shown in $Fig.\ \ref{F2}$.
  When the security level $r=1$, i.e. the wiretapper has access to exactly one edge of the network, there exists a (2,1) network code over $\mathbb{F}_2$ defined as follows:
  $U_{e_1}=x_1+x_2, U_{e_2}=x_2, U_{e_3}=y_1, U_{e_4}=y_1+y_2,$ and $U_{e_5}=x_2+y_1$.
  The sink node $\rho$ can decode $x_1+y_1$ and $x_2+y_2$ from $U_{e_1},U_{e_4},U_{e_5}$, and any one of $U_{e_1},\cdots, U_{e_5}$ cannot leakage any information about $(x_1+y_1,x_2+y_2)$.
  This example shows that $\widehat{C}_{us}(\mathcal{N},f,r)\geq 2$.
  On the other hand, for sum function computation task over network, the capacity cannot be greater than $C_{min}$.
  Otherwise, we will have a code with rate larger than $C_{min}$ which contradicts with the result in \cite{RD2012}.
  Thus, in this example, $\widehat{C}_{us}(\mathcal{N},f,r)=2$ is tight.
\end{example}

The fundamental reason why random keys are not needed is that the user secure condition (\ref{usecure}) means the leakage messages $\mathbf{Y}_W$ should not be a function of target result $\mathbf{X}_1+\cdots+\mathbf{X}_s$.
In above example, even the wiretapper knows $x_2$, he cannot get any information about $x_2+y_2$ since $y_2$ is uniformly distributed on $\mathbb{F}_q$.
Therefore, we focus on the capacity without random keys. In fact, if we allow random keys, we may have larger capacity since if a code is secure then it is also user secure.
\begin{example}\label{e4}
  In Example \ref{e2}, we claim that when the security level $r=2$, there doesn't exist any user secure network code such that $\rho$ can compute $\mathbf{X}+\mathbf{Y}$ without error.
  Suppose there exists an $(l,n)$ user secure code, i.e. source messages $\mathbf{X},\mathbf{Y}\in\mathbb{F}_q^l$, edge $e$ transmits a message $U_e\in\mathbb{F}_q^n$, and from solvability and user security, we have
  \begin{align*}
    \mathbf{H}(\mathbf{X}+\mathbf{Y}|U_{e_1},U_{e_4},U_{e_5})&=0,\\
    \mathbf{I}(\mathbf{X}+\mathbf{Y};U_{e_1},U_{e_4})&=0,\\
    \mathbf{I}(\mathbf{X}+\mathbf{Y};U_{e_1},U_{e_5})&=0,
  \end{align*}
  where $U_{e_1}$ is a function of $\mathbf{X}$, and $U_{e_4}$ is a function of $\mathbf{Y}$. Then we will have
  \begin{align*}
    \mathbf{I}(U_{e_1},U_{e_5};\mathbf{X}+\mathbf{Y}) & \geq \mathbf{I}(U_{e_1},U_{e_5};\mathbf{X}+\mathbf{Y}|\mathbf{Y}) \\
     & = \mathbf{H}(\mathbf{X}+\mathbf{Y}|\mathbf{Y})- \mathbf{H}(\mathbf{X}+\mathbf{Y}|U_{e_1},U_{e_5},\mathbf{Y}) \\
     & = \mathbf{H}(\mathbf{X})- \mathbf{H}(\mathbf{X}+\mathbf{Y}|U_{e_1},U_{e_5},U_{e_4},\mathbf{Y})\\
     & = \mathbf{H}(\mathbf{X})-0>0,
  \end{align*}
  which contradicts the user security.
  The first equation is from the definition of mutual information, and the second equation holds since $\mathbf{X}$ and $\mathbf{Y}$ are independent and uniformly distributed over $\mathbb{F}_q^l$ and $U_{e_4}$ is a function of $\mathbf{Y}$.
  From another view, consider $\mathbf{I}(\mathbf{X}+\mathbf{Y};U_{e_1},U_{e_4})$, we can prove
  \begin{align*}
    \mathbf{I}(\mathbf{X}+\mathbf{Y};U_{e_1},U_{e_4}) & \geq \mathbf{I}(\mathbf{X}+\mathbf{Y};U_{e_1},U_{e_4}|\mathbf{Y}) \\
     & = \mathbf{H}(U_{e_1},U_{e_4}|\mathbf{Y})+\mathbf{H}(\mathbf{X}+\mathbf{Y}|\mathbf{Y})-\mathbf{H}(\mathbf{X}+\mathbf{Y},U_{e_1},U_{e_4}|\mathbf{Y}) \\
     & = \mathbf{H}(U_{e_1})+\mathbf{H}(\mathbf{X})-\mathbf{H}(\mathbf{X}+\mathbf{Y},U_{e_1}|\mathbf{Y}) \\
     & = \mathbf{H}(U_{e_1})+\mathbf{H}(\mathbf{X})-\mathbf{H}(\mathbf{X},U_{e_1})\\
     & = \mathbf{I}(\mathbf{X};U_{e_1}) = \mathbf{H}(U_{e_1}) >0,
  \end{align*}
  which also contradicts with user security. The second equation holds since $U_{e_4}$ is determined by $\mathbf{Y}$.
\end{example}

From the above examples, we can conclude a sufficient and necessary condition for the existence of user secure network codes.

\begin{theorem}\label{th43}
  Let $\mathcal{N}$ be a network and $f$ be a sum function over $\mathbb{F}_q$. Suppose the security level is $r$, then there exists a user secure network code if and only if $r<\min\{C_{min}, s\}$, where $s$ is the number of source nodes.
\end{theorem}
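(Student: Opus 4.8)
The plan is to prove the two implications separately, and within the necessity direction to establish the two bounds $r<C_{min}$ and $r<s$ independently.

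\textbf{Necessity of $r<C_{min}$.} Suppose $r\ge C_{min}$. Then there is a global cut $C$ separating every source from $\rho$ with $|C|\le r$, so $C$ is itself an admissible wiretap set. Since no source lies between $C$ and $\rho$ and no random keys are used, the messages $\mathbf{Y}_{In(\rho)}$ are a deterministic function of $\mathbf{Y}_{C}$; combined with the decoding requirement $\mathbf{H}(f(\mathbf{M}_S)\mid \mathbf{Y}_{In(\rho)})=0$ this forces $\mathbf{H}(f(\mathbf{M}_S)\mid \mathbf{Y}_{C})=0$, hence $\mathbf{I}(f(\mathbf{M}_S);\mathbf{Y}_C)=\mathbf{H}(f(\mathbf{M}_S))=l\log q>0$, contradicting user security. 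Thus a user secure code forces $r<C_{min}$.

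\textbf{Necessity of $r<s$.} Here I would generalize the argument of Example~\ref{e4}. Assuming a user secure code and $r\ge s$, the goal is to exhibit a wiretap set $W$ with $|W|\le s\le r$ and $\mathbf{I}(f(\mathbf{M}_S);\mathbf{Y}_W)>0$. The mechanism is to isolate a single source, say $\sigma_1$: conditioning on $\mathbf{M}_{S\setminus\{1\}}$ puts $f(\mathbf{M}_S)$ and $\mathbf{M}_1$ in bijection, and one then uses the decoding requirement together with the fact that every edge in $Out(\sigma_i)$ is a function of $\mathbf{M}_i$ alone to argue that a suitable $s$-edge set — morally one ``merging'' edge per source — already determines $\mathbf{M}_1$ given the remaining source messages, so it cannot be independent of the sum. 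The delicate point, and the main obstacle, is that conditioning on $\mathbf{M}_{S\setminus\{1\}}$ can move mutual information in either direction, so the bound must be organised around a genuinely monotone quantity: concretely I would work with $\mathbf{I}(\mathbf{M}_1;\mathbf{Y}_W\mid \mathbf{M}_{S\setminus\{1\}})=\mathbf{H}(\mathbf{Y}_W\mid \mathbf{M}_{S\setminus\{1\}})$ and convert a positive value of this conditional quantity into unconditional leakage using the independence of the sources and the one-time-pad relation $f(\mathbf{M}_S)\perp \mathbf{M}_{S\setminus\{1\}}$. Choosing the $s$ wiretapped edges so that this conversion is valid for an \emph{arbitrary} topology, rather than only for the network of Fig.~\ref{F2}, is exactly where the count ``number of sources $=s$'' must be used, and I expect this to be the hardest step to make fully rigorous.

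\textbf{Sufficiency.} Assume $r<\min\{C_{min},s\}$ and construct a user secure code. Since the sum is computable at rate $C_{min}$ over a sum-network, I would start from a linear network code of rate $C_{min}$ computing $f$ and then seek a source precoding that makes it user secure. Writing each global encoding function as $\widehat g_e=\mathbf{X}\cdot g_e$ with $\mathbf{X}=(\mathbf{M}_1,\dots,\mathbf{M}_s)$, user security against $W$ becomes the linear condition that the ``sum direction'' (the all-ones combination across the $s$ source blocks) does not lie in the span of $\{g_e:e\in W\}$ — an intersection condition of the same flavour as (\ref{condition}), but imposed only on the single functional $f$ rather than on the entire source space. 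The condition $r<s$ guarantees that the $(s-1)$-dimensional space of source combinations transverse to the sum supplies enough independent masks (other sources acting as one-time pads, exactly as $y_1$ masks $x_2$ in Example~\ref{e2}), while $r<C_{min}$ guarantees a full-rank, decodable code exists. I would then close the argument with a genericity step: over a sufficiently large field, a random precoding satisfies the intersection condition for all $\binom{|E|}{r}$ wiretap sets at once, a union bound being affordable because each individual condition fails only on a proper subvariety. Combining the two directions yields the stated equivalence.
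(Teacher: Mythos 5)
Your proposal has genuine gaps in both halves of the necessity direction, and they are not repairable, because the necessity half of the statement is itself false; interestingly, the obstacle you explicitly flagged as ``the hardest step'' is exactly where the paper's own proof breaks down.

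\textbf{The $r<C_{min}$ step.} Your opening claim --- that $r\ge C_{min}$ yields a \emph{global} cut of size at most $r$ --- is wrong. $C_{min}$ is the minimum size of a cut separating \emph{some} source from $\rho$ (the paper's proof takes $I_C=\{\sigma_1\}$); the minimum global cut $G_{min}$ can be strictly larger, and your deterministic-propagation argument (no keys, so $\mathbf{Y}_{In(\rho)}$ is a function of $\mathbf{Y}_C$, so decoding forces leakage) is only valid when $C$ separates \emph{all} sources. So your argument proves necessity of $r<G_{min}$, not of $r<C_{min}$. Moreover, no argument can prove the $C_{min}$ claim: take two sources, each joined to $\rho$ by a single edge, with the raw code $U_{e_1}=\mathbf{M}_1$, $U_{e_2}=\mathbf{M}_2$. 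Then $C_{min}=1$, $s=2$, and for $r=1$ every wiretap set observes one $\mathbf{M}_i$ alone, which is independent of $\mathbf{M}_1+\mathbf{M}_2$ (the other source acts as a one-time pad). So a user secure code exists at $r=\min\{C_{min},s\}$, contradicting the theorem.

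\textbf{The $r<s$ step.} You left this open and correctly identified the difficulty: conditioning on $\mathbf{M}_{S\setminus\{1\}}$ does not move mutual information in the needed direction. This is precisely the flaw in the paper's proof, which asserts $\mathbf{I}(f(\mathbf{M}_S);\mathbf{Y}_W)\ge \mathbf{I}(f(\mathbf{M}_S);\mathbf{Y}_W\mid \mathbf{M}_{S\setminus\{1\}})$. Since $f(\mathbf{M}_S)$ is \emph{independent} of $\mathbf{M}_{S\setminus\{1\}}$, the correct identity is $\mathbf{I}(f;\mathbf{Y}_W\mid \mathbf{M}_{S\setminus\{1\}})=\mathbf{I}(f;\mathbf{Y}_W)+\mathbf{I}(f;\mathbf{M}_{S\setminus\{1\}}\mid \mathbf{Y}_W)\ge \mathbf{I}(f;\mathbf{Y}_W)$, i.e.\ the inequality runs the other way, and positivity of the conditional quantity implies nothing about unconditional leakage. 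And indeed the claim is false: take two sources each with two parallel edges to $\rho$, $q\ge 3$, $l=2$, $n=1$, with $\sigma_1$ sending $(x_1,x_2)$ on its two edges and $\sigma_2$ sending $(y_1+y_2,\,y_1-y_2)$ on its two edges. The sink decodes both sums, and one checks directly that every wiretap set of size $2$ is jointly independent of $(x_1+y_1,x_2+y_2)$: same-source pairs are masked by the other source, and each mixed pair consists of two functionals whose ``source directions'' are linearly independent, hence uniform given the sums. This is a user secure code at $r=2=s=C_{min}$. The only necessity statement that survives is $r<G_{min}$, which is your global-cut argument (and is in effect the paper's final theorem).

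\textbf{Sufficiency.} Here your route genuinely differs from the paper's: the paper selects one path per source, sends raw messages, and sums at merge nodes, arguing that some source remains unobserved; your approach precodes a rate-$C_{min}$ linear code so that the sum functional avoids $span(G(W))$ for every $|W|=r$. Your span-avoidance condition is in fact the correct characterization of user security for keyless linear codes (for uniform $\mathbf{X}$, $\mathbf{X}u$ is independent of $\mathbf{X}G(W)$ iff $u\notin span(G(W))$), and it is arguably sounder than the paper's argument, which overlooks that a wiretapper observing \emph{partial sums} that jointly cover all sources learns $f$ without observing any individual message. But your genericity/union-bound step must respect the block-diagonal constraint on the precoder, and in any case the statement being proved must first be corrected (plausibly with $G_{min}$ in place of $C_{min}$) before either your argument or the paper's can be closed.
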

\begin{proof}
  We first prove the only if part.
  Denote the source messages as $\mathbf{X}_1,\cdots,\mathbf{X}_s$, the sink node $\rho$ wants $\mathbf{X}_1+\cdots+\mathbf{X}_s$, while the wiretapper has access to any $r$ edges and cannot get any information about the sum.
  We prove this part in two steps.

  First, suppose $r\geq C_{min}$ and let $C$ be a cut with $|C|=C_{min}$.
  Without loss of generality, assume $I_C=\{\sigma_1\}$, let $C'= \cup_{i=2}^sOut(\sigma_i)$.
  Note that $G=C\cup C'$ is a global cut of the network, thus the sum can be decoded from messages transmitted by $G$, i.e.
  \begin{equation*}
     \mathbf{H}(\mathbf{X}_1+\cdots+\mathbf{X}_s|\{U_e,e\in G\})=0.
  \end{equation*}
  If there exists a user secure code with security level $r\geq C_{min}$, then the sum is protected from the wiretapper who has all messages in cut $C$, which indicates
  \begin{equation*}
    I(\mathbf{X}_1+\cdots+\mathbf{X}_s;\{U_e,e\in C\})=0.
  \end{equation*}
  Similar with the process in Example \ref{e4},
  \begin{align*}
    \mathbf{I}(\{U_e,e\in C\};\mathbf{X}_1+\cdots+\mathbf{X}_s)
     & \geq \mathbf{I}(\{U_e,e\in C\};\mathbf{X}_1+\cdots+\mathbf{X}_s|\{\mathbf{X}_2,\cdots,\mathbf{X}_s\}) \\
     & = \mathbf{H}(\mathbf{X}_1+\cdots+\mathbf{X}_s|\{\mathbf{X}_2,\cdots,\mathbf{X}_s\})-
         \mathbf{H}(\mathbf{X}_1+\cdots+\mathbf{X}_s|\{U_e,e\in C\},\{\mathbf{X}_2,\cdots,\mathbf{X}_s\}) \\
     & = \mathbf{H}(\mathbf{X}_1)- \mathbf{H}(\mathbf{X}_1+\cdots+\mathbf{X}_s|\{U_e,e\in C\},\{U_e,e\in C'\},\{\mathbf{X}_2,\cdots,\mathbf{X}_s\})\\
     & = \mathbf{H}(\mathbf{X}_1)-0>0,
  \end{align*}
  which contradicts with the secure condition above.
  The third equation holds since $\{U_e,e\in C'\}$ is determined by $\{\mathbf{X}_2,\cdots,\mathbf{X}_s\}$. And the last equation follows from that any global cut must carry all messages about $\mathbf{X}_1+\cdots+\mathbf{X}_s$.

  Similarly, suppose there is a user secure code with security level $r\geq s$, then for the edge set $E'=\{e_i:i\in[s],e_i\in Out(\sigma_i)\}$ with size $s$, the follow equation must be satisfied,
  \begin{equation*}
     \mathbf{I}(\mathbf{X}_1+\cdots+\mathbf{X}_s;\{U_{e_i},e_i\in E'\})=0.
  \end{equation*}
  Consider the conditional mutual information,
  \begin{align*}
    \mathbf{I}(\{U_{e_i},e_i\in E'\};\sum\limits_{i=1}^{s}\mathbf{X}_i)
    & \geq \mathbf{I}(\{U_{e_i},e_i\in E'\};\sum\limits_{i=1}^{s}\mathbf{X}_i|\{\mathbf{X}_2,\cdots,\mathbf{X}_s\}) \\
    & = \mathbf{H}(\sum\limits_{i=1}^{s}\mathbf{X}_i|\{\mathbf{X}_2,\cdots,\mathbf{X}_s\})
        +\mathbf{H}(\{U_{e_i},e_i\in E'\}|\{\mathbf{X}_2,\cdots,\mathbf{X}_s\})\\
       &\ \ \  -\mathbf{H}(\sum\limits_{i=1}^{s}\mathbf{X}_i,\{U_{e_i},e_i\in E'\}|\{\mathbf{X}_2,\cdots,\mathbf{X}_s\})\\
    & = \mathbf{H}(\mathbf{X}_1)+\mathbf{H}(U_{e_1})-\mathbf{H}(\mathbf{X}_1,U_{e_1})\\
    & = \mathbf{I}(\mathbf{X};U_{e_1}) = \mathbf{H}(U_{e_1}) >0,
  \end{align*}
  which is also a contradiction. Therefore, when $r\geq s$ or $r\geq C_{min}$, there doesn't exist any user secure code.

  For the if part, we claim that if $r<\min\{C_{min},s\}$, then there exists a user secure code with rate 1.
  The construction is as follows.
  For each source node $\sigma_i$, we select one path from $\sigma_i$ to $\rho$.
  Each source node sends its message $x_i$ directly and other nodes send the sum of messages they get.
  This network code has rate 1 and is user secure since $r<C_{min}$ and $r<s$, there exists at least one source and corresponding message cannot be observed by the wiretapper.
\end{proof}

From Theorem \ref{th43}, we know that there always exists a user secure network code when $r<\min\{C_{min},s\}$.
The following example shows $C_{min}$ is not always achievable, which means the study of user secure capacity is not trivial.
\begin{figure}[htbp]
  \centering
  \includegraphics[width=0.4\textwidth]{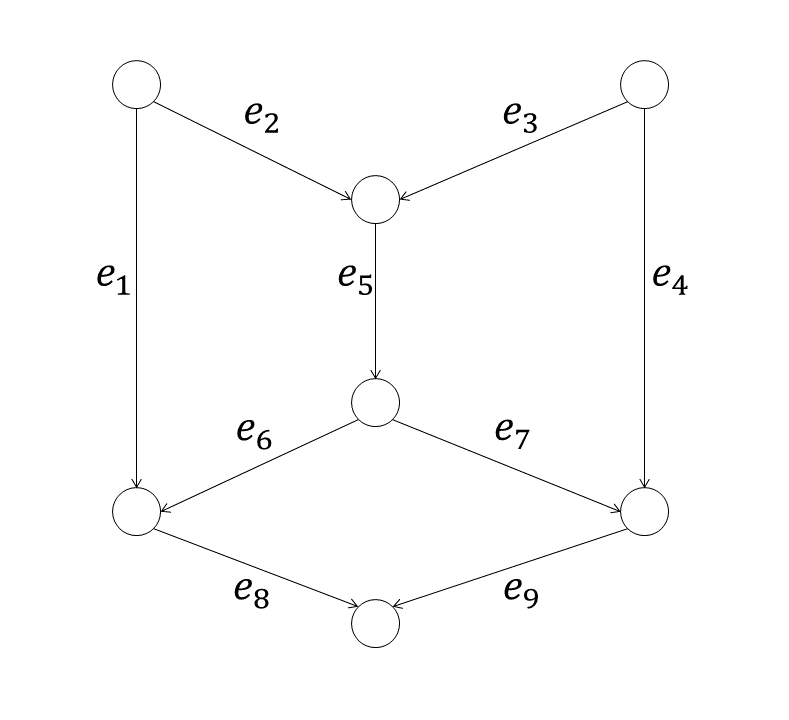}
  \caption{The network has 2 sources and the target function is $f=x+y$}\label{F3}
\end{figure}

\begin{example}\label{e3}
  Consider the reverse butterfly network shown in $Fig. \ref{F3}$.
  We claim that there doesn't exist a rate 2 network code for security level $r=1$.
  Suppose there exists a user secure network code with rate 2.
  Consider the global cut $\{e_8, e_9\}$, and let the wiretap edge $W=\{e_8\}$.
  By user secure condition (\ref{usecure}),
  \begin{equation*}
    \mathbf{H}(\mathbf{M}_1+\mathbf{M}_2|\mathbf{Y}_{e_8})=\mathbf{H}(\mathbf{M}_1+\mathbf{M}_2).
  \end{equation*}
  By the solvability of the code, we have
  \begin{equation*}
    \mathbf{H}(\mathbf{M}_1+\mathbf{M}_2|\mathbf{Y}_{e_8},\mathbf{Y}_{e_9})=0.
  \end{equation*}
  Combining the above equations,
  \begin{equation*}
     \mathbf{I}(\mathbf{M}_1+\mathbf{M}_2;\mathbf{Y}_{e_9}|\mathbf{Y}_{e_8})=\mathbf{H}(\mathbf{M}_1+\mathbf{M}_2|\mathbf{Y}_{e_8})
     -\mathbf{H}(\mathbf{M}_1+\mathbf{M}_2|\mathbf{Y}_{e_8},\mathbf{Y}_{e_9})=\mathbf{H}(\mathbf{M}_1+\mathbf{M}_2)>0,
  \end{equation*}
  which means $\mathbf{I}(\mathbf{M}_1+\mathbf{M}_2;\mathbf{Y}_{e_9})>0$. This means if wiretap edge $W=\{e_9\}$, the wiretapper will get some information about $\mathbf{M}_1+\mathbf{M}_2$, which is a contradiction.
\end{example}

Let $G$ be a global cut of the network, and $G_{min}$ be the smallest size of a global cut.
From Example \ref{e3} we can conclude the following theorem.
\begin{theorem}
  Let $\mathcal{N}$ be a network and $f$ be an algebraic sum over a finite field $\mathbb{F}_q$.
  If the security level $r<\min\{C_{min}, s\}$, then
  \begin{equation}\label{ubound}
    \widehat{C}_{us}(\mathcal{N},f,r)\leq\min\{C_{min},G_{min}-r\},
  \end{equation}
  and $\widehat{C}_{us}(\mathcal{N},f,r)=0$ otherwise.
\end{theorem}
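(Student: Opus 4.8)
The plan is to prove the two-part bound by treating each term separately and leveraging the machinery already established. The $\widehat{C}_{us}(\mathcal{N},f,r)\leq C_{min}$ part is immediate: any user secure code is in particular a valid (solvable) network code for computing the sum, so its rate cannot exceed the unconstrained sum-computation capacity, which equals $C_{min}$ by the result in \cite{RD2012}. The ``otherwise'' clause, namely $\widehat{C}_{us}=0$ when $r\geq\min\{C_{min},s\}$, is exactly the content of the ``only if'' direction of Theorem~\ref{th43}: if $r\geq C_{min}$ or $r\geq s$ then no user secure code exists at all, forcing the capacity to be zero. So the genuinely new content is the bound $\widehat{C}_{us}(\mathcal{N},f,r)\leq G_{min}-r$, which I would obtain by generalizing the entropy argument of Example~\ref{e3} from the specific global cut $\{e_8,e_9\}$ to an arbitrary minimum global cut.

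First I would fix a minimum global cut $G$ with $|G|=G_{min}$ and suppose an $(l,n)$ user secure code exists. The key structural fact is that because $G$ is a \emph{global} cut, the sum $\sum_i \mathbf{M}_i$ is decodable from $\{U_e:e\in G\}$, giving $\mathbf{H}(\sum_i\mathbf{M}_i \mid \mathbf{Y}_G)=0$. Meanwhile, for \emph{every} wiretap set $W\subset G$ with $|W|\leq r$, the user secure condition (\ref{usecure}) gives $\mathbf{I}(\sum_i\mathbf{M}_i;\mathbf{Y}_W)=0$. I would then estimate $l\cdot\log q=\mathbf{H}(\sum_i\mathbf{M}_i)$ by writing it as a telescoping sum of conditional mutual informations across the edges of $G$, peeling off the edges in some chosen $W$ (whose contribution is zero by user security) and bounding the remaining contribution by the entropy of the messages on $G\setminus W$. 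Concretely, since $\mathbf{H}(\sum_i\mathbf{M}_i\mid\mathbf{Y}_G)=0$ and $\mathbf{I}(\sum_i\mathbf{M}_i;\mathbf{Y}_W)=0$, we get
\begin{align*}
  l\cdot\log q
  &= \mathbf{H}\Big(\sum_i\mathbf{M}_i\Big)
   = \mathbf{I}\Big(\sum_i\mathbf{M}_i;\mathbf{Y}_G\Big)
   = \mathbf{I}\Big(\sum_i\mathbf{M}_i;\mathbf{Y}_{G\setminus W}\,\Big|\,\mathbf{Y}_W\Big)\\
  &\leq \mathbf{H}(\mathbf{Y}_{G\setminus W})
   \leq (G_{min}-r)\cdot n\cdot\log q,
\end{align*}
where the middle equality uses that adding $\mathbf{Y}_W$ contributes no information about the sum, so the whole mutual information with $\mathbf{Y}_G$ is carried by $\mathbf{Y}_{G\setminus W}$ conditioned on $\mathbf{Y}_W$. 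Dividing yields $l/n\leq G_{min}-r$.

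The step I expect to require the most care is justifying the chain of equalities above rigorously, in particular the passage $\mathbf{I}(\sum_i\mathbf{M}_i;\mathbf{Y}_G)=\mathbf{I}(\sum_i\mathbf{M}_i;\mathbf{Y}_{G\setminus W}\mid\mathbf{Y}_W)$. This is where one must combine $\mathbf{I}(\sum_i\mathbf{M}_i;\mathbf{Y}_W)=0$ (user security on $W$) with the chain rule $\mathbf{I}(\sum_i\mathbf{M}_i;\mathbf{Y}_G)=\mathbf{I}(\sum_i\mathbf{M}_i;\mathbf{Y}_W)+\mathbf{I}(\sum_i\mathbf{M}_i;\mathbf{Y}_{G\setminus W}\mid\mathbf{Y}_W)$, and the vanishing of the first summand is precisely the user secure hypothesis. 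One must also ensure the chosen $W\subset G$ has size exactly $r$ (possible since $r<C_{min}\leq G_{min}$, so $G$ has at least $r$ edges), and that $|G\setminus W|=G_{min}-r$. The remaining obstacle, more conceptual than technical, is confirming that this bound is the honest generalization of Example~\ref{e3}: there the global cut has size $2$, $r=1$, so $G_{min}-r=1$, correctly ruling out rate $2$. Once the entropy inequality is secured, combining $l/n\leq G_{min}-r$ with $l/n\leq C_{min}$ gives $\widehat{C}_{us}(\mathcal{N},f,r)\leq\min\{C_{min},G_{min}-r\}$, completing the proof.
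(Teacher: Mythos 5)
Your proposal is correct and follows essentially the same route as the paper: fix a minimum global cut $G$, invoke solvability to get $\mathbf{H}(\sum_i\mathbf{M}_i\mid\mathbf{Y}_G)=0$, invoke user security on a wiretap set $W\subset G$ of size $r$, and bound $l\cdot\log q=\mathbf{I}(\sum_i\mathbf{M}_i;\mathbf{Y}_{G\setminus W}\mid\mathbf{Y}_W)\leq\mathbf{H}(\mathbf{Y}_{G\setminus W})\leq(G_{min}-r)\cdot n\cdot\log q$, with the $C_{min}$ term and the ``otherwise'' clause delegated to \cite{RD2012} and Theorem~\ref{th43} exactly as the paper does. The only difference is presentational: you phrase the key step via the chain rule for mutual information, while the paper expands the same quantity as a difference of conditional entropies.
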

\begin{proof}
  We only prove $\widehat{C}_{us}(\mathcal{N},f,r)\leq G_{min}-r$, since $\widehat{C}_{us}(\mathcal{N},f,r)\leq C_{min}$ can be obtained from \cite{RD2012}.
  In fact, there doesn't exist a network code which can compute arithmetic sum on a network with rate larger than $C_{min}$ \cite{RD2012}.
  Similar to Example \ref{e3}, assume $G$ is a global cut with size $G_{min}$, and $W\subset G$ is a wiretap set.
  Suppose there exists an $(l,n)$ user secure code.
  Since source messages are uniformly and independently distributed,
  \begin{equation*}
    \mathbf{H}(\mathbf{M}_1+\mathbf{M}_2+\cdots+\mathbf{M}_s)=l\cdot\log q.
  \end{equation*}
   By user secure condition (\ref{usecure}),
  \begin{equation*}
    \mathbf{H}(\mathbf{M}_1+\cdots+\mathbf{M}_s|\mathbf{Y}_W)=\mathbf{H}(\mathbf{M}_1+\cdots+\mathbf{M}_s).
  \end{equation*}
  By the solvability of the code, we have
  \begin{equation*}
    \mathbf{H}(\mathbf{M}_1\cdots+\mathbf{M}_s|\mathbf{Y}_G)=0.
  \end{equation*}
  Combining the above equations,
  \begin{align*}
     (|G|-r)\cdot n\cdot\log q&\geq \mathbf{H}(\mathbf{Y}_{G\setminus W})\\
     & \geq \mathbf{I}(\mathbf{M}_1+\mathbf{M}_2;\mathbf{Y}_{G\setminus W}|\mathbf{Y}_{W})\\
     & = \mathbf{H}(\mathbf{M}_1+\mathbf{M}_2|\mathbf{Y}_{W})-\mathbf{H}(\mathbf{M}_1+\mathbf{M}_2|\mathbf{Y}_G)\\
     & = \mathbf{H}(\mathbf{M}_1+\mathbf{M}_2)=l\cdot\log q,
  \end{align*}
  which shows $l/n\leq G_{min}-r$.
\end{proof}

\section{Conclusions}\label{s6}
In this paper, we first present a new upper bound of the network computation capacity under the secure condition.
The new upper bound is strictly tighter than the previous one. When the network is a multi-edge tree, we can prove that the new upper bound is tight.
Meanwhile, we give a necessary condition for the existence of linear secure network codes for computing a sum function over network, which provides some insight of the construction of achievable network codes.

On the other hand, we consider the capacity problem under another new secure condition defined as user secure in this paper.
We first give a sufficient and necessary condition for the existence of user secure network codes.
Furthermore, an upper bound of the user secure computation capacity is addressed.
However, whether the bound is tight or not remains unknown yet.

\end{document}